\newtheorem*{theorem}{Theorem}
\newtheorem*{corollary}{Corollary}
\begin{document}
\title[On symmetries of elasticity tensors and Christoffel matrices]
{On symmetries of elasticity tensors and Christoffel matrices}
\author[A. B\'{o}na]{Andrej B\'{o}na}
\address{Department of Exploration Geophysics, Curtin Technical University, Perth, Australia}
\email{a.bona@curtin.edu.au}
\author[\c{C}. D\.{i}ner]{\c{C}a\u{g}r\i \ D\.{i}ner}
\address{Geophysics Department, Kandilli Observatory and Earthquake Research Institute, Bo\u{g}azi\c{c}i University, Istanbul, Turkey}
\email{cadiner@yahoo.com}
\author[M. Kochetov]{Mikhail Kochetov}
\address{Department of Mathematics and Statistics, Memorial University of Newfoundland, St.~John's, Newfoundland, Canada}
\email{mikhail@mun.ca}
\author[M. A. Slawinski]{Michael A. Slawinski}
\address{Department of Earth Sciences, Memorial University of Newfoundand, St.~John's, Newfoundland, Canada}
\email{mslawins@mun.ca}

\subjclass[2000]{Primary 74E10, secondary 74Q15, 86A15, 86A22.}

\keywords{anisotropy; Christoffel matrix; elasticity tensor; Hookean solid.}

\begin{abstract}
We prove that the symmetry group of an elasticity tensor is equal to the symmetry group of the corresponding Christoffel matrix.
\end{abstract}

\maketitle

\section{Introduction}
The Curie  principle states that the results are at least as symmetric as the causes. It is illustrated for wave phenomena by the fact that wavefronts propagating in a Hookean solid from a point source are at least as symmetric as the solid itself. In other words, the material symmetries of the elasticity tensor of that solid are a subgroup --- possibly proper --- of wavefront symmetries, as shown by B\'{o}na et al.~\cite{BBS}. Herein, we prove that the elasticity tensor and the Christoffel matrix, from which wavefronts are derived, have the same symmetry groups, as suggested by B\'{o}na et al.\  \cite{BBS}, but for reasons more subtle than presented therein. Thus, it follows that the increase of symmetry may occur in passing from the Christoffel matrix to wavefronts, not in passing from the elasticity tensor to the Christoffel matrix. The importance of this result is our being able to obtain information about material symmetries of a Hookean solid from measurements that are common in seismology, which allow us to estimate the Christoffel matrix, not the elasticity tensor directly.

We begin this paper with a brief description of relations between the elasticity tensor and the Christoffel matrix and between the matrix and the wavefronts. A more detailed description can be found in many sources, notably, {\v C}erven{\'y} \cite{cerveny}; the description below follows Bos and Slawinski \cite{BS}.

\section{Background}

A Hookean solid is defined by the following constitutive equation:
\begin{equation}\label{eq:Hooke}
\sigma_{ij}=\sum\limits _{k,\ell=1}^{3}c_{ijk\ell}\varepsilon_{k\ell},
\end{equation}
where $c$ is the elasticity tensor relating the stress tensor, $\sigma$, and the strain tensor, $\varepsilon$. As shown by Forte and Vianello \cite{FV}, $c$ belongs to one of eight symmetry groups; the symmetry group of $c$ is the set of orthogonal  transformations under which $c$ is invariant. Also, $c$ possesses index symmetries: $c_{ijk\ell}=c_{jik\ell}=c_{k\ell ij}$.

Inserting constitutive equation (\ref{eq:Hooke}) into the equations of motion,
\begin{equation}\label{EqMot}
\sum\limits _{j=1}^{3}\frac{\partial\sigma_{ij}}{\partial x_j}=\rho\frac{\partial^2u_i}{\partial t^2},
\end{equation}
where $i\in\left\{ 1,2,3\right\} $, $\rho$ is the mass density, $x$ and $t$ are the spatial and temporal variables, respectively, and $u$ is the displacement vector whose components are related to the strain tensor by $\varepsilon_{ij}:=(\partial u_i/\partial x_j+\partial u_j/\partial x_i)/2$,
we obtain the equations of motion in a Hookean solid,
\begin{equation}\label{ElDynEq}
\rho\left(x\right)\frac{\partial^{2}u_{i}\left(x,t\right)}{\partial t^{2}}=\sum\limits _{j,k,\ell=1}^{3}\left(\frac{\partial c_{ijk\ell}\left(x\right)}{\partial x_{j}}\frac{\partial u_{k}\left(x,t\right)}{\partial x_\ell}+c_{ijk\ell}\left(x\right)\frac{\partial^{2}u_{k}\left(x,t\right)}{\partial x_{j}\partial x_\ell}\right),
\end{equation}
where $i\in\left\{ 1,2,3\right\} $, which
describe propagation of displacement in an anisotropic inhomogeneous
elastic medium.
Wavefronts and polarizations of waves propagating in such a medium stem from the characteristic equation of these equations, which is
\begin{equation}\label{DetVan}
\det\left(\sum\limits _{j,\ell=1}^{3}c_{ijk\ell}n_{j}n_\ell-\frac{\rho}{v^2}\delta_{ik}\right)=0\text{,\qquad}i,k\in\;\left\{ 1,2,3\right\},
\end{equation}
where $n$ is the unit vector normal to the wavefront and $v$ is the wavefront velocity. It is common to let $\Gamma_{ik}(n):=c_{ijk\ell}n_jn_\ell$, which is the Christoffel matrix. The symmetry group of the Christoffel matrix is the set of orthogonal transformations, $A$, such that $\Gamma(Au)(Av,Aw)=\Gamma(u)(v,w)$, for all $u,v,w$ in $\mathbb{R}^3$.
Note that $\Gamma$ is quadratic in the first variable, $u$; also note that $u$ and $v$ herein and in the next section are generic variables, not the physical entities of equations (\ref{EqMot}), (\ref{ElDynEq}) and (\ref{DetVan}). Also, the Christoffel matrix possesses index symmetries, $\Gamma_{ij}=\Gamma_{ji}$, which are inherited from the index symmetries of $c$. Equation (\ref{DetVan}) is a third-degree polynomial in $1/v^2$. Its root are the eigenvalues of $\Gamma$ scaled by $\rho$. Each of the three eigenvalues corresponds to the velocity of one of the three types of waves that propagate in a Hookean solid, and each eigenvector to its polarization.

As suggested by B\'{o}na et al.\  \cite{BBS}, the increase of symmetries between the Hookean solid and the wavefronts propagating within it can occur only between the Christoffel matrix and the wavefronts, which is tantamount to the equality of the symmetry groups of the tensor and the matrix.
This equality is proven in the next section. The proof is based on the fact that the Christoffel matrix determines uniquely the elasticity tensor, as shown by  B\'{o}na et al.\  \cite{BBS}, and also on the fact that this one-to-one correspondence respects the action of $O(3)$, which guarantees the equality of the symmetry groups. 

\section{Equality of symmetry groups}
Explicitly, the correspondence between $c$ and $\Gamma$ is
\begin{equation}
 c_{iki\ell}=\Gamma(e_i)(e_k,e_\ell) \label{ikil}
 \end{equation}
and
\begin{equation}
c_{iij\ell}=\frac{1}{2}\left[\Gamma(e_i+e_j)(e_i,e_\ell)-\Gamma(e_i-e_j)(e_i,e_\ell)-2\Gamma(e_i)(e_j,e_\ell)\right],\label{iijl}
\end{equation}
which are equations (3.7) and (3.8) in B\'{o}na et al.\  \cite{BBS}.

For a given $\Gamma$, these equations allow us to determine the components of $c$ due to repetitions among $i,j,k,\ell$, but they have different forms for different components. Thus, it is not obvious that the relation between $c$ and $\Gamma$ respects the action of $O(3)$. To establish this property, we restate this relation in a coordinate-free form. We can regard $c$ as a quadrilinear form,
$c:\mathbb{R}^3\times\mathbb{R}^3\times\mathbb{R}^3\times\mathbb{R}^3\rightarrow\mathbb{R}^3$,
and $\Gamma$ as a function,
$\Gamma:\mathbb{R}^3\times\mathbb{R}^3\times\mathbb{R}^3\rightarrow\mathbb{R}^3$,
which is quadratic in the first variable and linear in the other two. Let us linearize $\Gamma$ by defining
\begin{equation}
\tilde\Gamma(x,y,v,w):=\frac{1}{2}\left[\Gamma(x+y)(v,w)-\Gamma(x)(v,w)-\Gamma(y)(v,w)\right],
\label{eq:GammaTilde}
\end{equation}
for all $x,y,v,w\in\mathbb{R}^3$. Then $\Gamma$ can be recovered from $\tilde\Gamma$ as follows: 
\begin{equation}
\Gamma(u)(v,w)=\tilde\Gamma(u,u,v,w).
\label{eq:GammaGamma}
\end{equation}

\begin{theorem}
For all $x,y,v,w\in\mathbb{R}^3$, we have
\begin{align}
c(x,y,v,w)&=\tilde\Gamma(x,v,y,w)+\tilde\Gamma(y,v,x,w)-\tilde\Gamma(x,y,v,w);\label{eq:c_GammaTilde} \\
\tilde\Gamma(x,y,v,w)&=\frac12[c(x,v,y,w)+c(y,v,x,w)]\label{eq:GammaTilde_c}.
\end{align}
\end{theorem}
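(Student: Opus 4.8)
The plan is to reduce both identities to a single dictionary between $c$ and $\Gamma$ at the level of multilinear forms, and then let the index symmetries of $c$ do all the work. First I would record the basic translation coming straight from the definition $\Gamma_{ik}(n)=c_{ijk\ell}n_jn_\ell$: regarding $\Gamma(u)$ as a bilinear form and reading off the index positions (the two copies of $n$ sit in the second and fourth slots, the free matrix indices in the first and third), one finds
\[
\Gamma(u)(v,w)=c(v,u,w,u)
\]
for all $u,v,w$. This is the only point at which the physical definition of $\Gamma$ enters; everything afterward is purely formal. A quick consistency check against (\ref{ikil}) confirms the slot assignment.

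Next I would compute $\tilde\Gamma$ directly from its definition (\ref{eq:GammaTilde}). Substituting the dictionary above and expanding the quadratic argument $c(v,x{+}y,w,x{+}y)$ by multilinearity, the two diagonal terms $c(v,x,w,x)$ and $c(v,y,w,y)$ cancel against $\Gamma(x)(v,w)$ and $\Gamma(y)(v,w)$, leaving
\[
\tilde\Gamma(x,y,v,w)=\tfrac12\bigl[c(v,x,w,y)+c(v,y,w,x)\bigr].
\]
To reach (\ref{eq:GammaTilde_c}) I would then apply the index symmetries of $c$, namely $c(a,b,c,d)=c(b,a,c,d)$ together with the induced relation $c(a,b,c,d)=c(a,b,d,c)$ (both following from $c_{ijk\ell}=c_{jik\ell}=c_{k\ell ij}$), to move each term into the shape $\tfrac12[c(x,v,y,w)+c(y,v,x,w)]$. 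This establishes the second identity.

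For the first identity I would avoid re-expanding $\Gamma$ and instead feed (\ref{eq:GammaTilde_c}) into the right-hand side of (\ref{eq:c_GammaTilde}). Evaluating the three terms $\tilde\Gamma(x,v,y,w)$, $\tilde\Gamma(y,v,x,w)$, and $\tilde\Gamma(x,y,v,w)$ by (\ref{eq:GammaTilde_c}) produces six copies of $c$; using only $c(a,b,c,d)=c(b,a,c,d)$, the two terms $c(x,y,v,w)$ and $c(y,x,v,w)$ combine into $2c(x,y,v,w)$ while the remaining four cancel in pairs, so the whole expression collapses to $c(x,y,v,w)$.

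The computations are entirely routine, so the only genuine obstacle is bookkeeping: keeping the four argument slots in the correct order through each symmetry move, and verifying that every permutation invoked is actually one of the index symmetries of $c$ rather than an unjustified transposition. I would close by stressing that both final formulas are manifestly coordinate-free, with no distinguished basis vectors appearing, so that the correspondence $c\leftrightarrow\Gamma$ commutes with the action of $O(3)$; this is precisely the feature needed to conclude the equality of the two symmetry groups.
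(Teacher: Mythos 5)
Your proof is correct, but it runs in the opposite logical direction from the paper's. The paper takes as its starting point the component formulas (\ref{ikil}) and (\ref{iijl}) quoted from B\'{o}na et al.\ \cite{BBS} --- i.e.\ expressions of $c$ in terms of $\Gamma$, whose derivation is not repeated --- and upgrades them to coordinate-free form: it rewrites (\ref{iijl}) as (\ref{iijlTilde}), extends by multilinearity, and obtains the mixed components $c(e_i,e_k,v,w)$ via a basis-change trick (replacing $\{e_i,e_k\}$ by $\{(e_i+e_k)/\sqrt{2},-(e_i-e_k)/\sqrt{2}\}$ in the orthonormal basis) followed by a polarization-style subtraction; equation (\ref{eq:GammaTilde_c}) is then deduced only at the end, from definition (\ref{eq:GammaTilde}) and (\ref{ikil}). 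You instead start from the one fact that needs no citation, the dictionary $\Gamma(u)(v,w)=c(v,u,w,u)$, expand the polarization directly to get $\tilde\Gamma(x,y,v,w)=\tfrac12[c(v,x,w,y)+c(v,y,w,x)]$, convert this to (\ref{eq:GammaTilde_c}) using the index symmetries, and then verify (\ref{eq:c_GammaTilde}) by substitution and cancellation. Your algebra checks out: the second-pair symmetry $c(a,b,c,d)=c(a,b,d,c)$ does follow from the two stated index symmetries, and the final collapse indeed needs only the first-pair swap $c(a,b,c,d)=c(b,a,c,d)$. What your route buys: it is fully self-contained, consuming nothing from \cite{BBS}; since (\ref{ikil}) and (\ref{iijlTilde}) drop out of (\ref{eq:c_GammaTilde}) as special cases, the uniqueness of $c$ given $\Gamma$ --- which the paper cites as an external input --- becomes a corollary of your argument rather than a hypothesis. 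What the paper's route buys: it exhibits the new coordinate-free formulas as direct repackagings of the component formulas (3.7) and (3.8) of \cite{BBS}, which is the connection emphasized in the conclusions.
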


\begin{proof}
Equation (\ref{eq:GammaTilde}) can be restated as $\Gamma(x+y)(v,w)=\Gamma(x)(v,w)+\Gamma(y)(v,w)+2\tilde\Gamma(x,y,v,w)$. Hence we can rewrite formula (\ref{iijl}) as
\begin{equation}\label{iijlTilde}
c(e_i,e_i,e_j,e_\ell)=2\tilde\Gamma(e_i,e_j,e_i,e_\ell)-\Gamma(e_i)(e_j,e_\ell).
\end{equation}
In view of the linearity of $c$ and $\tilde\Gamma$ in each variable and the linearity of $\Gamma$ in the second and third variables, we obtain:
\begin{equation}\label{eq:half_way1}
c(e_i,e_i,v,w)=2\tilde\Gamma(e_i,v,e_i,w)-\Gamma(e_i)(v,w),
\end{equation}
for all $v,w\in\mathbb{R}^3$. Now suppose $k\ne i$. Then we get:
\begin{equation}\label{eq:half_way2}
c(e_k,e_k,v,w)=2\tilde\Gamma(e_k,v,e_k,w)-\Gamma(e_k)(v,w),
\end{equation}
and, replacing $\{e_i,e_k\}$ by $\{(e_i+e_k)/\sqrt{2},-(e_i-e_k)/\sqrt{2}\}$ in the orthonormal basis, we also get
\begin{equation}\label{eq:half_way3}
c(e_i+e_k,e_i+e_k,v,w)=2\tilde\Gamma(e_i+e_k,v,e_i+e_k,w)-\Gamma(e_i+e_k)(v,w).
\end{equation}
Subtracting equations (\ref{eq:half_way1}) and (\ref{eq:half_way2}) from equation (\ref{eq:half_way3}) and using the index symmetry of $c$, we obtain
\begin{equation}\label{eq:Misha14}
c(e_i,e_k,v,w)=\tilde\Gamma(e_i,v,e_k,w)+\tilde\Gamma(e_k,v,e_i,w)-\tilde\Gamma(e_i,e_k,v,w).
\end{equation}
Equations (\ref{eq:half_way1}) and (\ref{eq:Misha14}) imply equation (\ref{eq:c_GammaTilde}) in view of the linearity of both sides of equation (\ref{eq:c_GammaTilde}).

Note that by substituting $x=v=e_i$, $y=e_k$ and $w=e_\ell$ in equation (\ref{eq:c_GammaTilde}), we obtain expression (\ref{ikil}), and by substituting $x=y=e_i$, $v=e_j$ and $w=e_\ell$, we obtain expression (\ref{iijlTilde}), which is an equivalent form of expression (\ref{iijl}).

Finally, equation (\ref{eq:GammaTilde_c}) follows immediately from definition (\ref{eq:GammaTilde}) and equation (\ref{ikil}).
\end{proof}

The mappings $\tilde\Gamma\mapsto c$ and $c\mapsto\tilde\Gamma$ given by expressions (\ref{eq:c_GammaTilde}) and (\ref{eq:GammaTilde_c}), respectively, are linear isomorphisms that are inverses of one another. Clearly, these mappings respect the action of $O(3)$.

\begin{corollary}
The symmetries of the elasticity tensor, $c$, are the same as the symmetries of the Christoffel matrix, $\Gamma$.
\end{corollary}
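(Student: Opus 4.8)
The plan is to deduce the corollary from the linear isomorphism between $c$ and $\tilde\Gamma$ established in the Theorem, together with the observation that this isomorphism is $O(3)$-equivariant. First I would fix the action of $O(3)$ on the forms by precomposition: for $A\in O(3)$ and for either of $c$ or $\tilde\Gamma$, I declare $A$ to be a symmetry precisely when the form is unchanged after replacing every argument $x$ by $Ax$, e.g.\ $c(Ax,Ay,Av,Aw)=c(x,y,v,w)$ for all $x,y,v,w\in\mathbb{R}^3$; for $\Gamma$ this is exactly the condition $\Gamma(Au)(Av,Aw)=\Gamma(u)(v,w)$ given in the Background section. Writing $\mathrm{Sym}(\cdot)$ for the resulting symmetry groups, the goal becomes $\mathrm{Sym}(c)=\mathrm{Sym}(\Gamma)$, and I would reach it in two steps through $\tilde\Gamma$, namely $\mathrm{Sym}(\Gamma)=\mathrm{Sym}(\tilde\Gamma)=\mathrm{Sym}(c)$.

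For the first equality, I would use that $\tilde\Gamma$ and $\Gamma$ are obtained from each other by the $O(3)$-equivariant formulas (\ref{eq:GammaTilde}) and (\ref{eq:GammaGamma}). If $A\in\mathrm{Sym}(\Gamma)$, then applying $A$ to all four arguments on the right-hand side of (\ref{eq:GammaTilde}) and using $A(x+y)=Ax+Ay$ together with the $A$-invariance of each $\Gamma$-term gives $\tilde\Gamma(Ax,Ay,Av,Aw)=\tilde\Gamma(x,y,v,w)$, so $A\in\mathrm{Sym}(\tilde\Gamma)$. Conversely, if $A\in\mathrm{Sym}(\tilde\Gamma)$, then (\ref{eq:GammaGamma}) yields $\Gamma(Au)(Av,Aw)=\tilde\Gamma(Au,Au,Av,Aw)=\tilde\Gamma(u,u,v,w)=\Gamma(u)(v,w)$, so $A\in\mathrm{Sym}(\Gamma)$.

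For the second equality, I would argue identically using the two formulas of the Theorem. Both (\ref{eq:c_GammaTilde}) and (\ref{eq:GammaTilde_c}) express one form as a fixed linear combination of the other with its arguments permuted, and since $A$ acts on every argument slot simultaneously, such permutations and scalar combinations commute with the action; hence $A\in\mathrm{Sym}(\tilde\Gamma)$ forces $A\in\mathrm{Sym}(c)$ and vice versa. Combining the two equalities then gives $\mathrm{Sym}(c)=\mathrm{Sym}(\Gamma)$.

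I expect the only genuine subtlety --- and the reason the coordinate-free Theorem is needed --- to lie in justifying this equivariance cleanly. The component-wise formulas (\ref{ikil}) and (\ref{iijl}) single out specific basis vectors $e_i,e_j,e_k,e_\ell$ and take different shapes for different index patterns, so from them it is not at all transparent that the correspondence commutes with an arbitrary rotation. The whole force of recasting the relation as (\ref{eq:c_GammaTilde})--(\ref{eq:GammaTilde_c}), in which only permutations of argument slots and scalar combinations appear, is to make the $O(3)$-equivariance manifest; once that reformulation is in hand, the corollary is immediate.
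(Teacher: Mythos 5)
Your proposal is correct and takes essentially the same route as the paper: both pass through $\tilde\Gamma$, obtaining $\mathrm{Sym}(\Gamma)=\mathrm{Sym}(\tilde\Gamma)$ from equations (\ref{eq:GammaTilde}) and (\ref{eq:GammaGamma}), and $\mathrm{Sym}(\tilde\Gamma)=\mathrm{Sym}(c)$ from the $O(3)$-equivariant isomorphisms (\ref{eq:c_GammaTilde}) and (\ref{eq:GammaTilde_c}). Your version simply spells out the equivariance checks that the paper compresses into the remark preceding its corollary (``Clearly, these mappings respect the action of $O(3)$''), including the correct observation that the subtlety the coordinate-free theorem resolves is the basis-dependence of formulas (\ref{ikil}) and (\ref{iijl}).
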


\begin{proof}
Any symmetry of $\Gamma$ is a symmetry of $\tilde\Gamma$ and vice versa by equations (\ref{eq:GammaTilde}) and (\ref{eq:GammaGamma}), respectively. It remains to invoke the isomorphisms (\ref{eq:c_GammaTilde}) and (\ref{eq:GammaTilde_c}) between $\tilde\Gamma$ and $c$.
\end{proof}

\section{Conclusions}
As stated by the corollary, the material symmetries of the elasticity tensor, $c$, are indeed the same as the symmetries of the Christoffel matrix, $\Gamma$, as sugessted by B\'{o}na et al.\  \cite{BBS}. This means that the increase of symmetry occurs between the Christoffel matrix and the wavefronts, not between the elasticity tensor and the Christoffel matrix. As shown by B\'{o}na et al.\  \cite{BBS}, this result entails that the symmetry group of $c$ is the intersection of the symmetry groups of wavefronts and polarizations.
Otherwise --- if the increase of symmetry could occur between $c$ and $\Gamma$ --- the knowledge of the wavefront and polarization symmetries would be insufficient to infer the symmetry of the Hookean solid. In mathematical language, the equality of symmetry groups of $c$ and $\Gamma$ ensures that the knowledge of eigenvalues and eigenvectors, which allows us to reconstruct $\Gamma$, entails the knowledge of the symmetries of $c$.
Furthermore, measurements might result in computing $\Gamma$ without considering wavefronts and polarizations, as shown by Dewangan and Grechka \cite{Grechka}. According to the corollary, the material symmetries of the Hookean solid can be obtained directly from $\Gamma$.

Finally, the formul{\ae} relating $c$ and $\Gamma$ that are stated in the theorem are of interest, since --- unlike equations (3.7) and (3.8) in B\'{o}na et al.\  \cite{BBS} --- they have the same form for all components of $c$ and $\Gamma$.

\section*{Acknowledgements} 
M. Kochetov's  and M.A. Slawinski's research was supported by the Natural Sciences and Engineering Research Council of Canada.

%%%%%Bibliography%%%%%

\end{document}